\documentclass[letterpaper, 10 pt, journal, twoside]{IEEEtran}

\usepackage{xcolor}
\usepackage{verbatim}
\usepackage{soul}

\newcommand{\af}[1]{{\small\textsf{{#1}}}}
\def\G{\mathcal{G}}
\def\V{\mathcal{V}}
\def\E{\mathcal{E}}

\def\R{{\mathbb R}}
\def\N{{\mathcal N}}

\def\C{\mathcal{C}}
\def\H{\mathcal{H}}
\def\view{{\sf View}}
\def\EE{\mathbb{E}}
\def\cov{\text{Cov}}
\def\dist{\text{ dist}}
\def\L{\mathcal{L}} 
\def\f{\af{f}}
\def\r{\mathcal{R}}
\def\D{\af{D}}

\pdfminorversion=4

\IEEEoverridecommandlockouts

\usepackage{amsmath}
\usepackage{amssymb,graphicx,verbatim,enumitem,mdframed,mathrsfs}
\usepackage{amsthm}
\usepackage{algorithm}
\usepackage{algpseudocode}
\usepackage{cite}

\newtheorem{theorem}{\bfseries Theorem}
\newtheorem{lemma}{\bfseries Lemma}
\newtheorem{corollary}{\textbf{Corollary}}
\newtheorem{definition}{\textbf{Definition}}

\providecommand{\mnorm}[1]{|#1|} 
\providecommand{\norm}[1]{\lVert#1\rVert}

\pagestyle{empty}

\makeatletter
\newcommand\fs@betterruled{%
  \def\@fs@cfont{\bfseries}\let\@fs@capt\floatc@ruled
  \def\@fs@pre{\vspace*{5pt}\hrule height.8pt depth0pt \kern2pt}%
  \def\@fs@post{\kern2pt\hrule\relax}%
  \def\@fs@mid{\kern2pt\hrule\kern2pt}%
  \let\@fs@iftopcapt\iftrue}
\floatstyle{betterruled}
\restylefloat{algorithm}
\makeatother


\title{Preserving Statistical Privacy in \\Distributed Optimization}

\author{Nirupam Gupta$^\dagger$, Shripad Gade$^\star$, Nikhil Chopra$^\ddag$ and Nitin H. Vaidya$^\dagger$
\thanks{Partially supported by NSF through grant ECCS1711554, award 1842198, and ARL under Cooperative Agreement W911NF-17-2-0196.}
\thanks{$\dagger$ Department of Computer Science, Georgetown University, Washington, DC 20057, USA. $\star$ Electrical and Computer Engineering, University of Illinois Urbana-Champaign, Urbana, IL, USA. $\ddag$ Mechanical Engineering, University of Maryland, College Park, MD, USA. Email: {\em nirupam115@gmail.com}, {\em gade3@illinois.edu}, {\em nchopra@umd.edu}, and {\em nitin.vaidya@georgetown.edu}.}
}

\begin{document}

\maketitle
\thispagestyle{empty}

\begin{abstract}                
We present a distributed optimization protocol that preserves statistical privacy of agents' local cost functions against a passive adversary that corrupts some agents in the network. The protocol is a composition of a distributed ``{\em zero-sum}" obfuscation protocol that obfuscates the agents' local cost functions, and a standard non-private distributed optimization method. We show that our protocol protects the statistical privacy of the agents' local cost functions against a passive adversary that corrupts up to $t$ arbitrary agents as long as the communication network has $(t+1)$-vertex connectivity. The ``{\em zero-sum}" obfuscation protocol preserves the sum of the agents' local cost functions and therefore ensures accuracy of the computed solution.
\end{abstract}
\begin{IEEEkeywords}
Statistical privacy, Distributed optimization, Large-scale systems, Sensor networks.
\end{IEEEkeywords}

\section{Introduction}
\label{sec:intro}

\IEEEPARstart{D}{istributed} optimization in multi-agent peer-to-peer networks has gained significant attention in recent years~\cite{yang2019survey}. In this problem, each agent has a local cost function and the goal for the agents is to collectively minimize sum of their local cost functions. Specifically, we consider $n$ agents, where each agent $i$ has a convex cost~$h_i: \mathbb{R}^m \to \mathbb{R}$ and a convex, compact set $\mathcal{X}$. A \emph{distributed optimization algorithm} enables the agents to collectively compute a {\em global} minimum,
\begin{equation}
    x^* \in \underset{x \in \mathcal{X}}{ \arg \min } \sum_{i = 1}^n h_i(x). \label{eqn:opt}
\end{equation}

We consider a scenario when a {\em passive} adversary can corrupt some of the agents in the network. The corrupted agents follow the prescribed protocol correctly, but may try to learn about the cost functions of other non-corrupted agents in the network. In literature, a passive adversary is also commonly referred as {\em honest-but-curious}. Prior work has shown that for certain distributed optimization algorithms, such as the Distributed Gradient Descent \af{(DGD)} method, a passive adversary may learn about all the agents' cost functions by corrupting only a subset of agents in the network~\cite{gade2016private}. This is clearly undesirable in general, and especially in cases where the cost functions may contain sensitive information~\cite{silaghi2004distributed}.

In this paper, we consider the Function Sharing \af{(FS)} protocol~\cite{gade2018private}, wherein the agents obfuscate their local cost functions with {\em correlated random functions} before executing a (non-private) distributed optimization algorithm such as the \af{DGD} method. The obfuscation strategy is aggregate invariant by construction and therefore, the agents compute a minimizer~\eqref{eqn:opt} accurately using solely their obfuscated local cost functions~\cite[Theorem~1]{gade2016private}. The \af{FS} protocol was first proposed by Gade et al.~\cite{gade2016private}. However, as of yet, the \af{FS} protocol lacks a formal privacy analysis. In this paper, we utilize the statistical privacy definition developed by Gupta et al.~\cite{gupta2017privacy,gupta2018privacy} to present a privacy guarantee of the \af{FS} protocol. 

In the past, distributed optimization protocols have been proposed for preserving differential privacy of the agents' local cost functions. However, these differetially private protocols suffer inevitably from privacy-accuracy trade-offs~\cite{nozari2018differentially, huang2015differentially}. That is, the agents can only compute an approximation of a global minimum $x^*$, defined by~\eqref{eqn:opt}.
The \af{FS} protocol allows the agents to compute a global minimum~\eqref{eqn:opt} accurately, and therefore, it obtains a weaker statistical privacy guarantee compared to the differentially private protocols.

Homomorphic encryption-based privacy protocols implicitly rely on two pragmatic assumptions, (1) computational intractability of {\em hard} mathematical problems, and (2) limited computational power of a passive adversary~\cite{silaghi2004distributed,hong2016privacy,LU2018314,8410603}. We show that the \af{FS} protocol provides {\em statistical} (or information-theoretic~\cite{katz2014introduction}) privacy, which is valid regardless of the above assumptions.

However, both the differetial privacy based protocols and the homomorphic encryption based protocols can provide privacy against eavesdroppers~\cite{nozari2018differentially, huang2015differentially, silaghi2004distributed, hong2016privacy,LU2018314,8410603}. The \af{FS} protocol, on the other hand, can only provide privacy against honest-but-curious agents in the network.

\vspace{0.3em}
\noindent{\bf \em Summary of Our Contributions:} We show that in the \af{FS} protocol the passive adversary obtains limited information, in a statistical sense, about the local cost functions of the non-corrupted (or {\em honest}) agents, as long as the agents corrupted by the passive adversary do not form a {\em vertex cut} in the underlying communication network topology. Thus, the \af{FS} protocol protects the statistical privacy of the honest agents' local cost functions against any passive adversary that corrupts up to $t$ arbitrary agents in the system as long as the communication network topology has $(t+1)$-vertex connectivity. 

It is of independent interest to note that a variant of the \af{FS} protocol is known to preserve the {\em perfect} statistical privacy in distributed average consensus problem~\cite{gupta2018information, gupta2019statistical,gade2020private}.

\section{Problem Setup}
\label{sec:prob_f}

\def\A{\mathcal{A}}
We consider a passive adversary, denoted by $\A$, that corrupts some agents in the network. The goal is to design distributed optimization protocols that protect the privacy of the non-corrupted (or {\em honest}) agents' local cost functions against the passive adversary, while allowing the agents to compute solution~\eqref{eqn:opt} accurately. The adversary is assumed passive and the corrupted agents execute the prescribed protocol correctly. 
For a distributed optimization protocol $\Pi$, we define {\em view} of $\A$ for an execution of $\Pi$ as follows.


\begin{definition}
\label{def:view}
For a protocol $\Pi$, the {view} of $\A$ constitutes the information {stored}, {transmitted} and {received} by the agents corrupted by $\A$ during the execution of $\Pi$.
\end{definition}


Privacy requires that the entire \emph{view} of $\A$
does not leak significant (or any) information about the local costs of the honest agents. Note that, by definition, $\A$ inevitably learns a point $x^* \in \arg \min_{x \in \mathcal{X}} \sum_{i = 1}^n h_i(x)$, assuming it corrupts at least one agent. A {\em perfectly} private protocol would not reveal any information about the honest agents' cost functions to $\A$ besides $x^*$. However, such a perfect privacy is quite difficult to guarantee. For now, we relax the privacy requirement, and only consider privacy for the {\em affine} terms of the agents' cost functions. However, as elaborated in Section~\ref{sec:extension}, the {\footnotesize \textsf{FS}} protocol can be extended easily for privacy of {\em higher-order polynomial} terms. That is, we implicitly assume that the non-affine terms of the agents' cost functions are known a priori to the passive adversary.

For each agent $i$, the cost function $h_i(x)$ can be decomposed into two parts; the {\em affine} term denoted by $h_i^{(1)}(x)$, and the {\em non-affine} term denoted by $h_i^{\dagger}(x)$. Specifically,
\begin{align}
    h_i(x) = h^{(1)}_i(x) + h^{\dagger}_i(x), \quad \forall x \in \R^m, \, i \in \{1, \ldots, \, n\}. \label{eqn:parts}
\end{align}
As the name suggests, the affine terms are affine in $x$. That is, for each $i$ there exists $\alpha_i \in \R^m$ and $\gamma_i \in \R$ such that, $h_i^{(1)}(x) = \alpha^T_i \, x + \gamma_i, \; \forall x \in \R^m$,
where $(\cdot)^T$ denotes the transpose. As the constants $\gamma_i$'s do not affect the solution of the optimization problem~\eqref{eqn:opt}, the agents need not share these constants with each other. Hence, the privacy of honest agents' $\gamma_i$'s can be trivially preserved. For a meaningful discussion of privacy we will ignore these constants. Let,
\begin{align}
    A = \left[\alpha_1, \ldots, \, \alpha_n\right] \label{eqn:alpha}
\end{align}
be the $m \times n$-dimensional matrix obtained by column-wise stacking of the individual agents' {affine coefficients}. 

Let $\C$ denote the set of agents corrupted by the adversary $\A$, and let $\H$ denote the remaining non-corrupted (or honest) agents. For privacy preservation, the protocol $\Pi$ may introduce some randomness in the system, in which case the {\em view} of $\A$ is a random variable. 
Let,

\begin{itemize}
    \item $\view_\C(A)$ denote the probability distribution of the {\em view} of $\A$ for an execution of $\Pi$ when the agents' private cost functions have affine coefficients~$A$.
\end{itemize}


Our definition of privacy below is built on {\em relative entropy}, which is also known as the Kullback-Leibler (KL) divergence. For a continuous random variable $\r$, let $\f_\r(r)$ denote its probability distribution or probability density function (p.d.f.) at $r \in \r$. The KL-divergence, denoted by $\D_{KL}$, quantifies the difference between a certain probability distribution $\f'_\r$ and the reference probability distribution $\f_\r$~\cite{cover2012elements}. Specifically, the KL-divergence of $\f'_{\r}$ from $\f_{\r}$ is defined as
\[\D_{KL} \left(\f_{\r}, \f'_{\r} \right) = \int_{\mathcal{R}}\f_{\r}(s) \log \left(\frac{\f_{\r}(s)}{\f'_{\r}(s)} \right) ds.\]
Let $\norm{.}$ denote the Euclidean norm for vectors and the Frobenius norm for matrices. 


\def\a{{(a)}}


\begin{definition}\label{def:ip}
For $\epsilon > 0$, a distributed optimization protocol $\Pi$ is said to be \mbox{``$(\C, \epsilon)$-affine private''} if for every pair of agents' affine coefficients $A = [\alpha_1, \ldots, \, \alpha_n]$ and $B = [\beta_1, \ldots, \, \beta_n]$ subject to the constraints:
\begin{align}
    \alpha_i = \beta_i , \; \forall \, i \in \C, \; \text{ and } \;
    \sum_{i \in \H} \alpha_i = \sum_{i \in \H} \beta_i , \label{eqn:constraint}
\end{align}%
the supports of $\view_\C(B)$ and $\view_\C(A)$ are identical, and 
\begin{align}
    \D_{KL} \left(\view_\C(A), \,\view_\C(B) \right) \leq \epsilon \norm{A - B}^2. \label{eqn:def_d_kl}
\end{align}
\end{definition}

In other words, Definition~\ref{def:ip} implies that if $\Pi$ is \mbox{\em $(\C, \epsilon)$-affine private} then an adversary $\A$ cannot unambiguously distinguish between two sets of agents' affine coefficients, $A$ and $B$, that are identical for the corrupted agents and have identical sum over all honest agents (i.e., satisfy~\eqref{eqn:constraint}). The value of $\epsilon$ signifies the {\em strength} of the privacy obtained. Smaller is the value of $\epsilon$, the more difficult it is for $\A$ to distinguish between two sets of agents' affine coefficients satisfying~\eqref{eqn:constraint}, and hence stronger is the privacy.




\section{Proposed Protocol and Privacy Guarantee}
\label{sec:pm}
In this section, we present the Function Sharing (\af{FS}) protocol and the formal privacy guarantee.

The notation used is as follows. The underlying communication network is modeled by an undirected graph~$\G = (\V, \, \E)$, where the set of nodes $\V = \{1, \ldots, \, n\}$ denotes the agents (indexed arbitrarily), and the set of edges $\E$ denotes the communication links between the agents. Being undirected, each edge $e \in \E$ is represented by an unordered pair of agents. For each $i$, the set $\N_i = \left\{ j \in \V ~ \vline ~ \left\{ i, \, j \right\} \in \E \right\}$ denotes the neighbors of agent $i$.

The \af{FS} protocol constitutes two phases as elaborated in Algorithm~\ref{Algo:FS-DGD}. In phase I, each agent~$i$ uses a ``zero-sum'' obfuscation protocol to compute an ``{\em effective cost function}'' $\widetilde h_i(x)$ based on its private local cost function~$h_i(x)$. In phase II, the agents use the \af{DGD} algorithm on their effective local cost functions to solve for the {\em effective} optimization problem,
\begin{align}
    \underset{x \in \mathcal{X}}{\text{minimize}} \sum_{i = 1}^n \widetilde h_i(x). \label{eqn:opt_new}
\end{align}

We now show that upon completion of phase II the agents indeed obtain a common minimum of the original optimization problem~\eqref{eqn:opt}.
As $\G$ is an undirected graph, 
$$\sum_{i = 1}^n u_i = \sum_{i = 1}^n \sum_{j \in \N_i}(r_{ji}-r_{ij}) = 0.$$

\noindent This implies that, for all $x \in \R^m$,
\begin{align}
    \sum_{i = 1}^n \widetilde h_i(x)  =  \sum_{i = 1}^n h_i(x) + \sum_{i = 1}^n u_i^T x = \sum_{i = 1}^n h_i(x). \label{eqn:sum_preserve}
\end{align}
Equivalently, the masking in phase I preserves the sum of the agents' local cost functions. Therefore, a solution for problem~\eqref{eqn:opt_new}, obtained using the \af{DGD} algorithm in~\cite{gade2016distributed}, is a solution for the original optimization problem~\eqref{eqn:opt}.





\begin{algorithm}[t!]
\caption{Function Sharing (\af{FS}) Protocol\label{Algo:FS-DGD}}
\begin{algorithmic}[1]
\Statex \textbf{Input: }Each agent $i$ has cost function $h_i(x)$ and $\sigma \in \mathbb{R}$.
\Statex \textbf{Output: }Minimizer, $x^* \in \arg \min_{x \in \mathcal{X}} \sum_{i=1}^n h_i(x)$
\Statex \hspace{-0.2in} \textbf{$\Diamond$ Phase 1:} {\em Masking of Cost Functions} 
\Statex Each agent $i\in\mathcal{V}$ executes:
\State Draws vectors $r_{ij} \sim N\left( 0_m, \,  \sigma^2 I_m\right)$ independently for $j \in \mathcal{N}_i$ and sends $r_{ij}$ to each agent~$j \in \N_i$. 
\State Compute the {\em mask} $u_i$ 
\begin{equation}
    u_i = \sum_{j \in \N_i}(r_{ij}-r_{ji}) \label{eqn:i_vn}
\end{equation}
\State Compute the {\em effective cost function} $\widetilde h_i(x)$,
\begin{align}
    \widetilde{h}_i (x) = h_i (x) + u_i^T x, \quad \forall x \in \R^m. \label{eqn:eff_cost}
\end{align}
\Statex \hspace{-0.2in}\textbf{$\Diamond$ Phase 2:} {\em Distributed Optimization} 
\State Agents execute the \af{DGD} algorithm~\cite{nedic2009distributed} on the local effective costs $\{\widetilde{h}_i(x)\}_{i\in\V}$. 
\end{algorithmic}%
\end{algorithm}%
\subsection{Privacy Guarantee}
\label{sub:pm_priv}

The privacy guarantee for the above \af{FS} protocol is presented by Theorem~\ref{thm:priv_2} below. 
Recall that $\C$ denotes the set of agents corrupted by the passive adversary. Let $\H = \V \setminus \C$ denote the set of honest agents, and let $\G_\H$ denote the residual graph obtained by removing the agents in $\C$, and the edges incident to them, from $\G$. Let $\L_{\H}$ denote the graph-Laplacian of $\G_\H$ and $\underline{\mu}(\L_{\H})$ denote the second smallest eigenvalue of $\L_\H$. The eigenvalue $\underline{\mu}(\L_{\H})$ is also commonly known as the {\em algebraic connectivity} of the graph~\cite{godsil2001algebraic}. 


\begin{theorem}
\label{thm:priv_2}
If $\C$ is not a vertex cut of $\G$, and the affine coefficients of the agents' private cost functions are independent of each other, then the \af{FS} protocol is $(\C, \epsilon)$-affine private, with $\epsilon = 1/(4\sigma^2 \underline{\mu}(\L_{\H}))$. 
\end{theorem}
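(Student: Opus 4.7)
The plan is to reduce $\A$'s view, modulo quantities that do not depend on $A$ versus $B$, to a single degenerate Gaussian summary statistic and then apply the closed-form KL formula for two Gaussians sharing a covariance. First, I would split $\view_\C(A)$ into: (i) items identical under $A$ and $B$---the cost functions $\{h_i\}_{i\in\C}$ (which depend only on $\{\alpha_i\}_{i\in\C}$, and by~\eqref{eqn:constraint} these match $\{\beta_i\}_{i\in\C}$), the non-affine parts $\{h_j^\dagger\}_{j\in\H}$ which are assumed a priori known, and the Phase~1 random vectors $r_{ij}$ incident to $\C$; and (ii) the Phase~2 DGD transcript, which is a deterministic function of the effective costs $\widetilde h_i$ and hence of $\{\alpha_i + u_i\}_{i\in\V}$. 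For each honest $i$, decompose $u_i = u_i^\C + u_i^\H$, with $u_i^\C = \sum_{j\in\N_i\cap\C}(r_{ij}-r_{ji})$ determined by~(i) and $u_i^\H = \sum_{j\in\N_i\cap\H}(r_{ij}-r_{ji})$ the only randomness hidden from $\A$. Setting $w_i := \alpha_i + u_i^\H$ for $i\in\H$, the rest of the view is a deterministic function of $(w_i)_{i\in\H}$ together with~(i), so by the data-processing inequality it suffices to bound $\D_{KL}(\mathcal D_A,\mathcal D_B)$, where $\mathcal D_A$ denotes the joint distribution of $(w_i)_{i\in\H}$ under $A$.

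Second, I would compute the covariance of $u^\H = (u_i^\H)_{i\in\H}$. Because the $r_{ij}$ are i.i.d.\ $N(0_m,\sigma^2 I_m)$, for distinct $i,k\in\H$ the cross-covariance $\cov(u_i^\H,u_k^\H)$ equals $-2\sigma^2 I_m$ when $\{i,k\}\in\E_\H$ (contributed solely by the pair $r_{ik},r_{ki}$) and $0$ otherwise, while $\mathrm{Var}(u_i^\H) = 2\sigma^2|\N_i\cap\H|\,I_m$. Componentwise, each coordinate of $u^\H$ is therefore $N(\mathbf 0,\, 2\sigma^2\L_\H)$, and different coordinates are independent. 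Because $\sum_{i\in\H} u_i^\H = 0$ almost surely (edge-level cancellation on $\E_\H$), the Gaussian is degenerate on $\{v:\mathbf 1^T v = 0\}$, and the vertex-cut hypothesis makes $\G_\H$ connected, so $\ker(\L_\H) = \mathrm{span}(\mathbf 1)$ and $\underline\mu(\L_\H)>0$.

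Third, the affine constraint in~\eqref{eqn:constraint} gives $\mathbf 1^T(A_\H^{(k)} - B_\H^{(k)}) = 0$ for each coordinate $k$, so the mean difference lies in $\mathrm{range}(\L_\H)$; this simultaneously equates the supports of $\mathcal D_A$ and $\mathcal D_B$ and justifies the Gaussian KL formula via the Moore--Penrose pseudoinverse. For each $k$,
\[\D_{KL}^{(k)} = \frac{1}{4\sigma^2}\,\bigl(A_\H^{(k)} - B_\H^{(k)}\bigr)^T \L_\H^\dagger \bigl(A_\H^{(k)} - B_\H^{(k)}\bigr).\]
Since $A_\H^{(k)} - B_\H^{(k)} \perp \mathrm{span}(\mathbf 1)$, the Rayleigh bound gives $v^T \L_\H^\dagger v \leq \|v\|^2/\underline\mu(\L_\H)$. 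Summing over $k$ using independence across coordinates, and noting $\|A-B\|^2 = \|A_\H - B_\H\|^2$ (from $\alpha_i = \beta_i$ on $\C$), yields the stated $\epsilon = 1/(4\sigma^2\underline\mu(\L_\H))$. The main obstacle will be the first step: rigorously justifying that the Phase~2 iterates contribute no information about $(\alpha_i)_{i\in\H}$ beyond $(w_i)_{i\in\H}$ and the Phase~1 data---this is where the independence hypothesis on the affine coefficients enters, ensuring that $\view_\C(A)$ is a well-defined distribution indexed by the deterministic parameter $A$ with no hidden dependence through a prior. The covariance computation and the pseudoinverse/Rayleigh bound are then routine.
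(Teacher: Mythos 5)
Your proposal is correct and follows essentially the same route as the paper: your summary statistic $w_i = \alpha_i + u_i^{\H}$ is exactly the paper's $\overline{\alpha}_i$ in~\eqref{eqn:over_alpha}, your per-coordinate covariance $2\sigma^2\L_{\H}$ is Lemma~\ref{lem:dist_a} applied to the residual graph, and the degenerate-Gaussian KL formula plus the Rayleigh bound is precisely the content of Lemma~\ref{lem:priv_1}. The only cosmetic differences are that you compute the covariance entrywise rather than via the incidence-matrix factorization $\L = \Theta\Theta^T$, and you invoke the data-processing inequality where the paper simply places all effective cost functions directly into the adversary's view as a worst case.
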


Theorem~\ref{thm:priv_2} implies that $\C$ not being a vertex cut\footnote{A {\em vertex cut} is a set of vertices of a graph which, if removed -- together with any incident edges -- disconnects the graph~\cite{godsil2001algebraic}.} of $\G$ is sufficient for $(\C, \epsilon)$-affine privacy. Note that, smaller the value of $\epsilon$, stronger is the privacy. According to Theorem~\ref{thm:priv_2}, $\epsilon$ is inversely proportional both to the variance $\sigma^2$ of the elements of random vectors $r_{ij}$'s used for masking of agents' local costs, and the {\em algebraic connectivity} of the residual network topology $\G_\H$. Therefore, the agents can achieve stronger privacy by using random vectors with larger variances (i.e., larger $\sigma^2$) in phase I of the \af{FS} protocol. Additionally, \af{FS} protocol guarantees stronger privacy if the residual honest graph $\G_\H$ is densely connected. 



We further note that the \af{FS} protocol can guarantee privacy against any passive adversary that corrupts at most $t$ agents in the network if the network has $(t+1)$-vertex connectivity. Specifically, we have the following corollary of Theorem~\ref{thm:priv_2}.


\begin{corollary}
\label{cor:f}
If $\G$ has $(t+1)$-vertex connectivity and the affine coefficients of the agents' private cost functions are independent of each other, then for an arbitrary set $\C \subseteq \V$ with $|\C| \leq t$ the \af{FS} protocol is $(\C,\epsilon)$-affine private with
\[ \epsilon = \max \left\{ \frac{1}{4\sigma^2 \underline{\mu}(\L_{\H})} ~ \vline ~ \H = \V \setminus \C, ~ \mnorm{\C} \leq t \right\}. \]
\end{corollary}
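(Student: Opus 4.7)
The plan is to derive Corollary~\ref{cor:f} as an immediate specialization of Theorem~\ref{thm:priv_2}, using the definition of vertex connectivity to verify the hypothesis of Theorem~\ref{thm:priv_2} for every admissible corrupted set $\C$ and then taking a worst-case bound over such sets.

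First, I would invoke the definition of $(t+1)$-vertex connectivity: a graph $\G$ has $(t+1)$-vertex connectivity precisely when the removal of any set of at most $t$ vertices leaves the graph connected. In particular, for every $\C \subseteq \V$ with $|\C| \leq t$, the residual graph $\G_\H$ on the honest vertices $\H = \V \setminus \C$ is connected, so $\C$ is not a vertex cut of $\G$. Moreover, connectedness of $\G_\H$ ensures that its algebraic connectivity $\underline{\mu}(\L_\H)$ is strictly positive, so the quantity $1/(4\sigma^2 \underline{\mu}(\L_\H))$ is finite.

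Next, for each fixed $\C$ with $|\C| \leq t$, Theorem~\ref{thm:priv_2} applies and yields that the \af{FS} protocol is $(\C, \epsilon_\C)$-affine private with $\epsilon_\C = 1/(4\sigma^2 \underline{\mu}(\L_\H))$. To produce a uniform guarantee that does not depend on the specific identity of $\C$, I would observe that the condition~\eqref{eqn:def_d_kl} in Definition~\ref{def:ip} is monotone in $\epsilon$: if a protocol satisfies the KL-divergence bound with some $\epsilon_\C$, then it also satisfies it with any larger constant. Hence replacing $\epsilon_\C$ by
\[
\epsilon \;=\; \max\left\{\, \frac{1}{4\sigma^2\,\underline{\mu}(\L_\H)} ~\Big|~ \H = \V \setminus \C,~ |\C| \leq t \,\right\}
\]
preserves the privacy guarantee for each individual $\C$. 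The maximum is well-defined because there are only finitely many subsets of $\V$ of size at most $t$, and each contributes a finite positive value by the argument above.

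The main step, if any, is really the observation that $(t+1)$-vertex connectivity rules out every subset of size at most $t$ from being a vertex cut; once this is in hand, the rest is just repeated application of Theorem~\ref{thm:priv_2} together with monotonicity of the privacy definition in $\epsilon$. I do not anticipate any substantive obstacle, and no additional probabilistic or spectral argument beyond what already enters Theorem~\ref{thm:priv_2} is required.
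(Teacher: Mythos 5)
Your proposal is correct and matches the paper's (implicit) reasoning exactly: the paper offers no separate proof of Corollary~\ref{cor:f}, treating it as an immediate consequence of Theorem~\ref{thm:priv_2} via precisely the observation that $(t+1)$-vertex connectivity prevents any $\C$ with $|\C|\leq t$ from being a vertex cut, combined with taking the worst case over the finitely many admissible $\C$. Your added remarks on monotonicity of Definition~\ref{def:ip} in $\epsilon$ and finiteness of the maximum are sound and make explicit what the paper leaves unstated.
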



The above connectivity condition for privacy is indeed tight. Specifically, the $(t+1)$-vertex connectivity is {\em necessary} for privacy against at most $t$ colluding honest-but-curious agents in the consensus-based distributed gradient and subgradient optimization algorithms~\cite{gade2016private,gade2017private, yan2013distributed}.

\subsection{Privacy of Higher-Degree Polynomial Terms}
\label{sec:extension}

The \af{FS} protocol presented in Algorithm~\ref{Algo:FS-DGD} only protects the privacy of affine coefficients of local cost functions, as formally stated in Theorem~\ref{thm:priv_2}. In what follows, we show an easy extension to protect privacy of higher degree polynomial terms of agents' private cost functions. Here, we assume the agents' cost functions to be univariate, i.e., $x \in \R$.

For each agent $i$, let $\alpha^{(\ell)}_i$ denote the $\ell$-th degree coefficient of its cost function $h_i(x)$.
Similar to the definition of {\em $(\C, \, \epsilon)-$affine privacy}, we now define the privacy of the $\ell$-th degree coefficients $A^{(\ell)} = [\alpha^{(\ell)}_1, \ldots, \alpha^{(\ell)}_n]$ against a passive adversary that corrupts a set of agents $\C$. Let $\view_\C(A^{(\ell)})$ denote the probability distribution of the {\em view} of adversary $\A$ when $\ell$-th degree coefficients of agents' private cost functions are given by $A^{(\ell)}$.  


\vspace{0.6em}

\noindent{\em Privacy Definition}: For $\epsilon > 0$, protocol $\Pi$ is said to preserve the \mbox{\em $(\C, \epsilon)$-privacy} of $\ell$-th degree coefficients $A^{(\ell)}$ if for every other set of $\ell$-th degree coefficients $B = [\beta^{(\ell)}_1, \ldots, \, \beta^{(\ell)}_n]$ subject to the constraints: $$\beta^{(\ell)}_i = \alpha^{(\ell)}_i, \; \forall \, i \in \C, \; \text{ and } \;
    \sum_{i \in \H} \beta^{(\ell)}_i = \sum_{i \in \H} \alpha^{(\ell)}_i,$$
the support of $\view_\C(A^{(\ell)})$ \& $\view_\C (B^{(\ell)})$ are identical, and
\[\D_{KL} \left(\view_\C(A^{(\ell)}), \,\view_\C (B^{(\ell)}) \right) \leq \epsilon \norm{A^{(\ell)} - B^{(\ell)}}^2. \label{eqn:def_d_kl_poly}\]

When defining the distribution $\view_\C(A^{(\ell)})$, we implicitly assume that the passive adversary $\A$ knows all the coefficients of the honest agents' costs, except the $\ell$-th coefficients $\{\alpha^{(\ell)}_i, ~ i \in \H\}$. Thus, the privacy analysis here is conservative.

\vspace{0.6em}

\noindent{\em Modified \af{FS} Protocol and Privacy Guarantee}: In the first phase, the agents mask the coefficients $A^{(\ell)}$ in a similar manner as the masking of the affine coefficients delineated in Algorithm~\ref{Algo:FS-DGD} to compute the effective cost functions.

Note that in this case, due to the non-affine masking, the effective cost functions $\widetilde{h}_i(x)$'s may become non-convex. The sum of the effective cost functions, however, is still a convex function (see~\eqref{eqn:sum_preserve}). 
As discussed in~\cite{gade2016distributed}, the \af{DGD} algorithm allows agents to minimize convex sum of their local non-convex cost functions, provided that the local cost functions' gradients are Lipschitz continuous~\cite[Theorem 1]{gade2016private}. The \af{DGD} can be substituted with other distributed optimization algorithms, provided those algorithms also minimize convex sum of non-convex functions (see \cite{gade2016distributed} for details).

Now, Theorem~\ref{thm:priv_2} implies that if $\C$ does not form a vertex cut of the network topology $\G$ then the \af{FS} protocol, modified as above, preserves the \mbox{\em $(\C, \epsilon)$-privacy} of $\ell$-th degree coefficients $A^{(\ell)}$ for each $\ell = \{1,\ldots,d\}$, where, privacy parameter $\epsilon = 1/(4\sigma^2 \underline{\mu}(\L_{\H}))$.

\section{Proof of Theorem~\ref{thm:priv_2}}
\label{sec:pa}
In this section, we present the formal proof for Theorem~\ref{thm:priv_2}. In principle, the proof is a generalization of the privacy analysis presented in~\cite{gupta2017privacy}. First, we state a few critical observations in Lemmas~\ref{lem:dist_a} and~\ref{lem:priv_1} below.


Let $\L$ denote the graph-Laplacian of the network topology $\G$.  
As $\G$ is undirected, $\L$ is a diagonalizable matrix~\cite{godsil2001algebraic}. Specifically, there exists a unitary matrix $M$ constituting the orthogonal eigenvectors of $\L$ such that\footnote{$Diag(y_1, .. , y_n)$ is a diagonal matrix with diagonal entries $y_1, .. , \, y_n$.}, $\L = M Diag\left(\mu_1, \ldots, \, \mu_{n} \right)M^T$ where $\mu_1 \leq \mu_2 \leq \cdots \leq \mu_n$ are the eigenvalues of $\L$. When $\G$ is connected, $\mu_1 = 0$ and $\mu_2 > \mu_1$~\cite{godsil2001algebraic}. We denote the {\em generalized inverse} of $\L$ by $\L^{\dagger}$. Note that~\cite{gutman2004generalized},
\begin{align}
    \L^{\dagger} = M~Diag \left(0, \, 1/\mu_2, \ldots, \, 1/\mu_{n}\right) ~ M^T \label{eqn:gen_inv}
\end{align}
For future usage, we denote the second smallest eigenvalue of $\L$, i.e., $\mu_2$, by $\underline{\mu}(\L)$. Let $0_n$ and $1_n$ denote the zero and the one vectors, respectively, of dimension $n$. For a positive real value $c$, $N^\dagger(0_n, \, c \L)$ denotes the {\em degenerate} Gaussian distribution~\cite{rao1973linear}. Specifically, 
if $\r \sim N^\dagger(0_n, \, c \L)$ and $\G$ is a connected graph then,
\begin{align}
     f_{\r}(r) = \begin{cases}
     \frac{1}{\sqrt{\det^*(2\pi c \L)}}\exp\left(-\frac{r^T\L^{\dagger}r}{2c}\right) & \text{, } r^T 1_n = 0 \\
     0 & \text{, otherwise} 
     \end{cases}
\end{align}
where ${\det}^* (2 \pi c \L) = (2 \pi c)^{n-1}\prod_{i = 2}^{n}\mu_i$. Henceforth, for a vector $v$, $v^k$ denotes the $k$-th element of $v$ unless otherwise noted. For $i \in \V$, recall that $u_i$ is the {\em mask} (see~\eqref{eqn:i_vn}). Let,
\begin{align}
    U^k = \left[ u^k_1, \ldots, u^k_n \right]^T, \quad k = 1, \ldots, \, m. \label{eqn:all_masks}
\end{align}
be a $n$-dimensional vector comprising the $k$-th elements of the {\em masks} computed by the agents in phase I of the \af{FS} protocol. For a random vector $\r$, we denote its mean by $\EE(\r)$ and its covariance matrix by $\cov(\r)$. Note that \\
$\cov(\r) = \EE \left(\r - \EE(\r) \right) \left(\r - \EE(\r) \right)^T$.




\begin{lemma}
\label{lem:dist_a}
If $\G$ is a connected graph then for each $k \in \{1, \ldots, \, m\}$, $U^k \sim N^\dagger \left( 0_n, ~ \, 2\sigma^2 \L \right)$.
\end{lemma}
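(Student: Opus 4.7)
The plan is to establish three facts in sequence: (i) each $U^k$ is jointly Gaussian, (ii) it has mean $0_n$, and (iii) its covariance matrix equals $2\sigma^2\L$, and then argue that the support constraint $1_n^T U^k = 0$ matches the null space of $\L$, so the distribution is precisely the degenerate Gaussian $N^\dagger(0_n, 2\sigma^2\L)$ specified in the statement.

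First I would observe that $U^k$ is a linear function of the collection of independent Gaussians $\{r_{ij}^k : i \in \V, j \in \N_i\}$: by combining the definitions in equations~\eqref{eqn:i_vn} and~\eqref{eqn:all_masks}, each entry $u_i^k = \sum_{j \in \N_i}(r_{ij}^k - r_{ji}^k)$ is a fixed linear combination of these independent $N(0,\sigma^2)$ variables, so $U^k$ is a zero-mean multivariate Gaussian. The mean calculation is immediate by linearity of expectation. The zero-sum/support property follows exactly as in equation~\eqref{eqn:sum_preserve}: telescoping over the undirected edges gives $1_n^T U^k = \sum_{i=1}^n u_i^k = 0$ almost surely, so $U^k$ lies in $\{v \in \R^n : 1_n^T v = 0\}$, which is the orthogonal complement of the null space of $\L$ (since $\G$ is connected, $\L$ has a one-dimensional kernel spanned by $1_n$).

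The main step, and the only one that requires a careful bookkeeping argument, is computing $\cov(U^k) = \EE[U^k (U^k)^T]$. The key point is that the random variables $\{r_{ij}^k\}$ are indexed by \emph{ordered} pairs $(i,j)$ with $j \in \N_i$, and distinct ordered pairs yield mutually independent variables of variance $\sigma^2$. For the diagonal, I would compute
\begin{align*}
\EE[(u_i^k)^2] &= \EE\Bigl[\Bigl(\sum_{j \in \N_i}(r_{ij}^k - r_{ji}^k)\Bigr)^2\Bigr] = \sum_{j \in \N_i}(\sigma^2 + \sigma^2) = 2\sigma^2 \deg(i),
\end{align*}
which matches $2\sigma^2 \L_{ii}$. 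For the off-diagonal with $i \neq \ell$, I would identify the only $r$-variables shared between $u_i^k$ and $u_\ell^k$: namely $r_{i\ell}^k$ and $r_{\ell i}^k$, and only when $\{i,\ell\} \in \E$. A short expansion gives $\EE[u_i^k u_\ell^k] = \EE[(r_{i\ell}^k - r_{\ell i}^k)(r_{\ell i}^k - r_{i\ell}^k)] = -2\sigma^2$ when $\ell \in \N_i$, and $0$ otherwise, matching $2\sigma^2 \L_{i\ell}$ in both cases.

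I expect the covariance bookkeeping to be the main technical obstacle, because one must be careful that $r_{ij}$ and $r_{ji}$ are distinct independent variables (drawn by agents $i$ and $j$ respectively), and that for non-neighboring $i, \ell$ the sums $u_i^k$ and $u_\ell^k$ involve entirely disjoint families of $r$-variables. Once the covariance is pinned down as $2\sigma^2 \L$ and the mean as $0_n$, the joint Gaussianity plus the almost sure constraint $1_n^T U^k = 0$ uniquely identifies the law of $U^k$ as the degenerate Gaussian $N^\dagger(0_n, 2\sigma^2 \L)$ defined earlier in the excerpt, completing the proof.
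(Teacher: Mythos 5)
Your proof is correct, and it reaches the same three conclusions as the paper (joint Gaussianity, zero mean, covariance $2\sigma^2\L$, plus the support identification), but it computes the covariance by a different mechanism. The paper introduces an oriented incidence matrix $\Theta$ of $\G$ and the per-edge differences $c_e = r_{ji}-r_{ij}$, which are i.i.d.\ $N(0,2\sigma^2 I_m)$; it then writes $U^k = \Theta C^k$ and invokes the standard identity $\L = \Theta\Theta^T$ to get $\cov(U^k) = 2\sigma^2\Theta\Theta^T = 2\sigma^2\L$ in one line. You instead do the entrywise bookkeeping directly: $\EE[(u_i^k)^2] = 2\sigma^2\deg(i)$ on the diagonal, and $\EE[u_i^k u_\ell^k] = -2\sigma^2$ or $0$ off the diagonal according to whether $\{i,\ell\}\in\E$, matching $2\sigma^2\L_{i\ell}$ case by case. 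Your off-diagonal computation correctly isolates the only shared variables $r_{i\ell}^k, r_{\ell i}^k$ and gets the sign right. The incidence-matrix route is more compact and packages the cross-terms into a known graph identity; your elementary route makes the cancellation structure explicit and requires no graph-theoretic machinery beyond the definition of $\L$. One small point to make explicit in a final write-up: connectivity of $\G$ is what guarantees $\mathrm{rank}(\L) = n-1$, so that the support of $U^k$ is the \emph{entire} hyperplane $\{v : 1_n^T v = 0\}$ (not a proper subspace of it) and the density formula for $N^\dagger(0_n, 2\sigma^2\L)$ given in the paper applies; you gesture at this but it is the only place the connectedness hypothesis enters.
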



\begin{IEEEproof}
Assign an arbitrary order to the set of edges, i.e., let $\E = (e_1, \ldots, \, e_{\mnorm{\E}})$.
For each edge $e_l$ where $l \in \{1, \ldots, \, \mnorm{\E}\}$, we define a vector $\theta_{e_l}$ of size $n$ whose $i$-th element denoted by $\theta^i_{e_l}$ is given as follows:
\[
	\theta^i_{e_l} = \left\{\begin{array}{cl}1 & \hspace*{3pt} \text{if } e_l = \{i,\,j\} \text{ and } i < j\\ -1 &  \hspace*{3pt} \text{if } e_l = \{i,\,j\} \text{ and } i > j \\ 0 &  \hspace*{3pt} \text{otherwise.}\end{array}\right. 
\]
Let $\Theta = \left[ \theta_{e_1}, \ldots, \,\theta_{e_{\mnorm{\E}}} \right]$ be an \emph{oriented incidence matrix} of graph $\G$~\cite{godsil2001algebraic}.
For each edge $e = \{i , \, j \}$ with $i<j$, 
\begin{align}
    c_{e} \triangleq r_{ji}-r_{ij}. \label{eqn:b_e}
\end{align}
Since the each random vector in $\{r_{ij}, ~ i, \, j \in \V\}$ is identically and independently distributed (i.i.d.) by a normal distribution $N(0_m, \sigma^2 I_m)$,~\eqref{eqn:b_e} implies that for each edge $e_l$ the random vector $c_{e_l}$ is i.i.d.~as $N(0, 2 \sigma^2 I_m)$. Therefore, for each $k$, the random variable $c^k_{e_l}$ has normal distribution of $N(0, 2 \sigma^2)$. Let, $C^k = [c^k_{e_1}, c^k_{e_2},  \cdots, c^k_{e_{\mnorm{\E}}}]^T$.
For two distinct edges $e$ and $e'$, the random vectors $c_{e}$ and $c_{e'}$ are independent. Therefore, 
\begin{align}
    \EE (C^k) (C^k)^T = 2 \sigma^2 I_{\mnorm{\E}}, \label{eqn:cov_bk}
\end{align}
where $I_{\mnorm{\E}}$ is $\mnorm{\E} \times \mnorm{\E}$ identity matrix. Moreover, from~\eqref{eqn:i_vn}, $U^k = \Theta \, C^k, \; \forall k \in \{1, \ldots, \, m\}$. As $\G$ is assumed connected, the support of $U^k$ is the entire space orthogonal to $1_n$. Also, $\EE(U^k) = \Theta \EE(C^k) = 0_n$. As $\L = \Theta \Theta^T$~\cite{godsil2001algebraic}, $\cov(U^k) = \Theta \left (\EE (C^k) (C^k)^T \right) \Theta^T = 2 \sigma^2 \, \Theta \Theta^T = 2\sigma^2 \L$. Thus, $U^k$ has the generalized Gaussian distribution $N^\dagger(0_n, \, 2 \sigma^2 \L)$. 
\end{IEEEproof}


Using the above lemma, we show that the knowledge of the {\em effective cost functions} does not provide significant information about the {\em affine coefficients} of the agents' private cost functions. 

Consider two possible executions $E_A$ and $E_B$ of the \af{FS} protocol such that the affine coefficients of the agents' effective cost functions in both executions are given by $\widetilde{A} = \left[\widetilde{\alpha}_1, \ldots, \, \widetilde{\alpha}_n \right]$.
In execution $E_A$, the agents have local cost functions with affine coefficients $A = [\alpha_1, \ldots, \alpha_n]$, and in execution $E_B$, the agents have local cost functions with affine coefficients $B = [\beta_1, \ldots, \beta_n]$. Let $f_{\widetilde{A}|A}$ and $f_{\widetilde{A}|B}$ denote the conditional p.d.f.s~of $\widetilde{A}$ given that the affine coefficients of the agents' private cost functions are $A$ and $B$, respectively. Recall that $\underline{\mu}(\L)$ denotes the second smallest eigenvalue of the graph-Laplacian matrix $\L$, i.e., $\mu_2$.



\begin{lemma}
\label{lem:priv_1}
If $\mathcal{G}$ is connected, and $\sum_{i = 1}^n \alpha_i = \sum_{i = 1}^n \beta_i$, then supports of $f_{\widetilde{A}|A}$ and $f_{\widetilde{A}|B}$ are identical, and 
\begin{align}
    \D_{KL} \left(f_{\widetilde{A}|A}, \, f_{\widetilde{A}|B} \right) \leq \frac{1}{4\sigma^2\underline{\mu}(\L)} \norm{A - B}^2 ~ .
\end{align}
\end{lemma}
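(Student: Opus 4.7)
The plan is to reduce the claim to a direct computation of the KL divergence between two shifted degenerate Gaussians, and then bound the relevant quadratic form using the algebraic connectivity of $\G$.

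First, I would write down the explicit relationship between the private affine coefficients and the effective ones. By line~\eqref{eqn:eff_cost}, for every agent $i$ we have $\widetilde{\alpha}_i = \alpha_i + u_i$ in execution $E_A$, and $\widetilde{\alpha}_i = \beta_i + u_i$ in execution $E_B$. Stacking the $k$-th coordinates into the vector $\widetilde{A}^k = (\widetilde{\alpha}^k_1, \ldots, \widetilde{\alpha}^k_n)^T$, we get $\widetilde{A}^k = A^k + U^k$ in $E_A$ and $\widetilde{A}^k = B^k + U^k$ in $E_B$, where $A^k, B^k$ are the vectors of $k$-th components of $A, B$. So conditional on the private affine coefficients, $\widetilde{A}^k$ is simply a translate of $U^k$.

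Next, I would invoke Lemma~\ref{lem:dist_a} and the independence of the components (which follows since each $r_{ij}\sim N(0_m,\sigma^2 I_m)$ has i.i.d.~coordinates, making $U^1,\ldots,U^m$ mutually independent). This gives $\widetilde{A}^k\mid A\sim N^\dagger(A^k,\,2\sigma^2\L)$ and $\widetilde{A}^k\mid B\sim N^\dagger(B^k,\,2\sigma^2\L)$, independently across $k$. The support of a $N^\dagger(\mu,\,2\sigma^2\L)$ density is the affine hyperplane $\{v : 1_n^T v = 1_n^T\mu\}$. Since $\sum_i \alpha_i = \sum_i \beta_i$, we have $1_n^T A^k = 1_n^T B^k$ for each $k$, hence the supports of $f_{\widetilde{A}\mid A}$ and $f_{\widetilde{A}\mid B}$ coincide.

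Then I would compute the KL divergence coordinate-wise. For two degenerate Gaussians with the same covariance $\Sigma = 2\sigma^2 \L$ and means differing on the support, the standard formula (using the pseudo-inverse $\L^\dagger$ from~\eqref{eqn:gen_inv}) yields
\begin{align*}
\D_{KL}\!\left(f_{\widetilde{A}^k\mid A},\,f_{\widetilde{A}^k\mid B}\right) = \frac{1}{2}(A^k - B^k)^T (2\sigma^2 \L)^\dagger (A^k - B^k) = \frac{1}{4\sigma^2}(A^k - B^k)^T \L^\dagger (A^k - B^k).
\end{align*}
Because $\sum_i(\alpha_i^k - \beta_i^k)=0$, the vector $A^k - B^k$ is orthogonal to $1_n$, i.e.~lies in the span of the eigenvectors of $\L^\dagger$ with eigenvalues $1/\mu_2,\ldots,1/\mu_n$. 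Therefore $(A^k-B^k)^T\L^\dagger(A^k-B^k)\le \|A^k-B^k\|^2/\underline{\mu}(\L)$.

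Finally, I would use independence across coordinates $k=1,\ldots,m$ to add up the KL divergences, giving
\begin{align*}
\D_{KL}\!\left(f_{\widetilde{A}\mid A},\,f_{\widetilde{A}\mid B}\right) = \sum_{k=1}^{m}\D_{KL}\!\left(f_{\widetilde{A}^k\mid A},\,f_{\widetilde{A}^k\mid B}\right) \le \frac{1}{4\sigma^2\underline{\mu}(\L)}\sum_{k=1}^{m}\|A^k - B^k\|^2 = \frac{\|A-B\|^2}{4\sigma^2\underline{\mu}(\L)},
\end{align*}
where the last equality is the definition of the Frobenius norm. The main obstacle is really just the careful handling of the degenerate Gaussian: justifying the form of the density on its affine support, checking that the supports of the two distributions coincide under the hypothesis $\sum_i \alpha_i = \sum_i \beta_i$, and verifying that the difference $A^k - B^k$ lies in the range of $\L$ so that the pseudo-inverse eigenvalue bound gives exactly $1/\underline{\mu}(\L)$ rather than a weaker constant. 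Everything else is either linearity of the translation or a standard Gaussian KL computation.
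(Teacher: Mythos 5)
Your proposal is correct and follows essentially the same route as the paper's proof: translate $U^k$ by $A^k$ (resp.~$B^k$) to get degenerate Gaussians via Lemma~\ref{lem:dist_a}, check the supports coincide under $\sum_i\alpha_i=\sum_i\beta_i$, compute the per-coordinate KL divergence as $\frac{1}{4\sigma^2}(A^k-B^k)^T\L^\dagger(A^k-B^k)$, bound it using the orthogonality of $A^k-B^k$ to $1_n$ and the spectrum of $\L^\dagger$, and sum over $k$ by independence. The only cosmetic difference is that you cite the standard same-covariance Gaussian KL formula where the paper writes out the integral of the log-ratio explicitly; the result is identical.
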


\begin{IEEEproof}
Let, $\widetilde{A}^k$ and $A^k$ denote the column vectors representing the $k$-th rows of the effective affine coefficeints $\widetilde{A}$ and the actual affine coefficients $A$, respectively. That is, $\widetilde{A}^k = \left[ \widetilde{\alpha}^k_1, \ldots, \, \widetilde{\alpha}^k_n\right]^T \text{ and } A^k = \left[ \alpha^k_1, \ldots, \, \alpha^k_n\right]^T$.
The proof comprises three parts.

\noindent {\em Part I:} Recall from~\eqref{eqn:i_vn}, $\widetilde{\alpha}^k_i = \alpha^k_i + u^k_i$ for all $i$ and $k$. 
Therefore (see~\eqref{eqn:all_masks} for the notation $U^k$), $\widetilde{A}^k = A^k + U^k$.
As $U^k$ is independent of $A^k$ for every $k$, we get,
\begin{align}
    f_{\widetilde{A}^k | A^k}(\widetilde{\alpha}^k_1, \ldots, \, \widetilde{\alpha}^k_n) = 
    f_{U^k}\left( \widetilde{A}^k - A^k   \right). \label{eqn:alpha_a_k}
\end{align}
Therefore, from Lemma~\ref{lem:dist_a}, if $\sum_{i = 1}^n \widetilde{\alpha}^k_i = \sum_{i = 1}^n \alpha^k_i$ then,
\begin{align}
\begin{split}
    & f_{\widetilde{A}^k | A^k}(\widetilde{\alpha}^k_1, \ldots, \, \widetilde{\alpha}^k_n) = \\
    & \frac{1}{\sqrt{\det^*(4\pi\sigma^2\L)}}\exp\left(-\frac{ (\widetilde{A}^k - A^k) ^T\L^{\dagger} (\widetilde{A}^k - A^k)}{4\sigma^2}\right) \label{eqn:pdf_1}
\end{split}
\end{align}
Else if $\sum_{i = 1}^n \widetilde{\alpha}^k_i \neq \sum_{i = 1}^n \alpha^k_i$ then
\begin{align}
    f_{\widetilde{A}^k | A^k}(\widetilde{\alpha}^k_1, \ldots, \, \widetilde{\alpha}^k_n) = 0,  \label{eqn:pdf_2}
\end{align}
From~\eqref{eqn:pdf_1} and~\eqref{eqn:pdf_2}, it is easy to see that the supports of the conditional p.d.f.s $f_{\widetilde A^k | A}$ and $f_{\widetilde A^k | B}$ are identical. 


\noindent{\em Part II}: From~\eqref{eqn:pdf_1}, 
\begin{align*} 
    \log \frac{f_{\widetilde A^k | A^k}(\widetilde{\alpha}^k_1, \ldots, \, \widetilde{\alpha}^k_n)}{f_{\widetilde A^k | B^k}(\widetilde{\alpha}^k_1, \ldots, \, \widetilde{\alpha}^k_n)} = \frac{(A^k - B^k)^T \L^{\dagger} (2\widetilde{A}^k - A^k - B^k)}{4 \sigma^2}
\end{align*}
Let $s = \widetilde{A}^k-A^k$, then we get, $\D_{KL}\left( f_{\widetilde{A}^k | A^k}, \, f_{\widetilde{A}^k | B^k} \right) = $
\begin{small}\begin{align*}
    &\frac{1}{4\sigma^2}\int_{s \in \R^n}(A^k - B^k)^T \L^{\dagger} (2s + A^k - B^k)f_{U^k}(s) \, ds \\
    &=\frac{1}{2\sigma^2}(A^k - B^k)^T\L^{\dagger}\EE(U^k) + \frac{1}{4\sigma^2}(A^k -  B^k)^T \L^{\dagger} (A^k - B^k).
\end{align*}%
\end{small}%
From Lemma~\ref{lem:dist_a}, $\EE(U^k) = 0_n$. Therefore, 
\begin{align}
    \D_{KL}\left(f_{\widetilde{A}^k | A^k} , \, f_{\widetilde{A}^k | B^k} \right) = \frac{1}{4\sigma^2}(A^k - B^k)^T \L^{\dagger} (A^k - B^k). \label{eqn:we_get}
\end{align}
As $\G$ is assumed connected, $\text{rank}(\L) = n-1$ and $\L 1_n = 0_n$. Recall that $1^T_n(A^k - B^k) = 0_n$. Thus, the vector $A^k - B^k$ belongs to the space orthogonal to the nullspace of $\L$. Now, substituting $\L^{\dagger}$ from~\eqref{eqn:gen_inv} in~\eqref{eqn:we_get} we obtain that
\begin{align}
    \D_{KL} \left(f_{\widetilde{A}^k | A^k} , \, f_{\widetilde{A}^k | B^k} \right) \leq \frac{\norm{A^k - B^k}^2}{4\sigma^2\underline{\mu}(\L)}.  \label{eqn:d_kl_1}
\end{align}
\noindent{\em Part III}: For $k \neq l$, $U^k$, $U^{l}$ are independent of each other. From~\eqref{eqn:alpha_a_k}, $f_{\widetilde{A} | A} = \prod_{k = 1}^m f_{\widetilde{A}{^k} | A^k} ~ \text{, and similarly, } ~ f_{\widetilde{A} | B} = \prod_{k = 1}^m f_{\widetilde{A}{^k} | B{^k}}.$
This, due to the KL-divergence property, implies that 
\[\D_{KL}\left( f_{\widetilde{A} | A}, \, f_{\widetilde{A} | B} \right) = \sum_{k = 1}^m \D_{KL}\left(f_{\widetilde{A}^k | A^k}, \, f_{\widetilde{A}^k | B^k} \right).\]
Substituting from~\eqref{eqn:d_kl_1} above concludes the proof.
\end{IEEEproof}

\vspace{0.4em}

Theorem~\ref{thm:priv_2} can be now proved easily using Lemma~\ref{lem:priv_1}.

\vspace{0.4em}
\noindent \textbf{\em Proof of Theorem~\ref{thm:priv_2}. }Recall that $\C$ denotes the set of corrupted agents and $\H=\V\setminus \C$ denotes the set of honest agents. Let $\E_\C$ denote set of edges incident to~$\C$ and $\E_{\H} = \E \setminus \E_\C$ be the set of edges incident only to honest agents.

Let the agents' true affine coefficients be given by an $m \times n$-dimensional matrix $A = [\alpha_1, \ldots, \, \alpha_n]$, as defined in~\eqref{eqn:alpha}. Recall the definition of $\view_\C(A)$ from Section~\ref{sec:prob_f}. In this part, we derive the p.d.f.~of $\view_\C(A)$ for the \af{FS} protocol, assuming the worst-case scenario where the {\em effective cost functions} of all the agents are revealed to the corrupted agents in the second phase. From Definition~\ref{def:view}, note that the {\em view} of the adversary $\A$ for the \af{FS} protocol comprises the following information:
\begin{enumerate}[leftmargin=*]
    \item The corrupted agents' private and {\em effective} cost functions, i.e., $\{h_i(x), \, \widetilde{h}_i(x), ~ i \in \C\}$.
    \item The set of random vectors $R_\C = \{r_{ij}, ~ \{i, \, j\} \in \E_\C\}$. 
    \item The {\em effective cost functions} of the honest agents, i.e., $\{\widetilde{h}_i(x), ~ i \in \H\}$. 
\end{enumerate}

For each agent $i \in \V$, let $\widetilde{\alpha}_i$ denote the affine coefficient of $\widetilde{h}_i(x)$. Let $\widetilde{A}_\C = [\widetilde{\alpha}_i, ~ i \in \C]$ and $\widetilde{A}_{\H} = [\widetilde \alpha_i , ~ i \in \H]$ be the collection of the effective affine coefficients of the corrupted and the honest agents, respectively. Let $f_{\left(\widetilde{A}_\H, \,\widetilde{A}_\C,  \, R_\C \right) \, \vline \,   A}$ denote the conditional joint p.d.f.~of $\widetilde{A}_\H$, $\widetilde{A}_\C$ and $R_\C$ given the agents' true affine coefficients $A$. From above we obtain that
\begin{align}
    \view_\C(A) = f_{\left(\widetilde{A}_\H, \,\widetilde{A}_\C,  \, R_\C \right) \, \vline \, A}~. \label{eqn:view_final}
\end{align}
For each agent $i \in \V$, let $\C_i = \N_i \cap \C$. Note that, see~\eqref{eqn:eff_cost}, 
\begin{align}
    \widetilde{\alpha}_i = \alpha_i + \sum_{j \in \N_i \setminus \C_i} (r_{ij} - r_{ji}) + \sum_{j \in \C_i} (r_{ij} - r_{ji}), ~ \forall i. \label{eqn:alpha_alpha}
\end{align}
For each honest agent $i \in \H$, let 
\begin{align}
    \overline{\alpha}_i = \alpha_i + \sum_{j \in \N_i \setminus \C_i} (r_{ij} - r_{ji}). \label{eqn:over_alpha}
\end{align}
Let $\overline{A}_{\H} = \left[ \overline{\alpha}_i, ~ i \in \H\right]$ be the collection of honest agents' $\overline{\alpha}_i$'s. Recall that, for two agents $i$ and $j$, the vectors $r_{ij}, ~ r_{ji} \in R_\C$ if and only if $i \in \C$ or $j \in \C$. Therefore, for each honest agent $i \in \H$, the value of $\sum_{j \in \C_i} (r_{ij} - r_{ji})$ is deterministic given $R_\C$. Thus,
\begin{align}
    f_{\left(\widetilde{A}_\H, \,\widetilde{A}_\C,  \, R_\C \right) \, \vline \,  A} = f_{\left(\overline{A}_{\H}, \,\widetilde{A}_\C,  \, R_\C \right) \, \vline \,  A}. \label{eqn:prob_view}
\end{align}
As the agents' affine coefficients are assumed independent of each other, we have from~\eqref{eqn:over_alpha}, $\overline{A}_{\H}$ is independent of $\widetilde{A}_{\C}$. Moreover,~\eqref{eqn:over_alpha} also implies that $\overline{A}_{\H}$ is independent of $R_\C$. Therefore, $f_{\left(\overline{A}_{\H}, \,\widetilde{A}_\C,  \, R_\C \right) \, \vline \,  A} = f_{\overline{A}_{\H} \vline A} \, f_{\left(\widetilde{A}_\C,  \, R_\C \right) \vline A}.$
Note that (i) $\widetilde{A}_\C$ and $R_\C$ are independent of the honest agents' affine coefficients $A_\H = [\alpha_i, \ i \in \H]$, and (ii) $\overline{A}_{\H}$ is also independent of the corrupted agents' affine coefficients $A_\C = [\alpha_i, ~ i \in \C]$. Thus, 
$f_{\left(\overline{A}_{\H}, \,\widetilde{A}_\C,  \, R_\C \right) \, \vline \,  A} = f_{\overline{A}_{\H} \, \vline \,  A_\H} ~ f_{\left(\widetilde{A}_\C,  \, R_\C \right) \, \vline \,  A_\C}$.
Upon substituting this in~\eqref{eqn:prob_view}, and using~\eqref{eqn:view_final}, we obtain that
\begin{align}
    \view_\C(A) = f_{\overline{A}_{\H} \, \vline \,  A_\H} ~ f_{\left(\widetilde{A}_\C,  \, R_\C \right)\, \vline \,  A_\C}. \label{eqn:view_final_A}
\end{align}
Now, consider an alternate scenario where the agents' collective affine coefficients are $B = [\beta_1, \ldots, \, \beta_n]$, such that $\beta_i = \alpha_i, ~ \forall i \in \C$, and $\sum_{i \in \V}\beta_i = \sum_{i \in \V} \alpha_i$. Using similar arguments as above, we will obtain that
\begin{align}
    \view_\C(B) = f_{\overline{B}_{\H} \, \vline \,  B_\H} ~ f_{\left(\widetilde{B}_\C,  \, R_\C \right) \, \vline \,  B_\C}  \label{eqn:view_final_B}
\end{align}
where $\overline{B}_{\H}$, $B_\H$, $\widetilde{B}_\C$ and $B_\C$ are the counterparts of $\overline{A}_{\H}$, $A_\H$, $\widetilde{A}_\C$ and $A_\C$, respectively. 

Using the additive property of KL-divergence~\cite{cover2012elements}, from~\eqref{eqn:view_final_A} and~\eqref{eqn:view_final_B} we obtain that 
\begin{align}
   & \D_{KL}\left(\view_\C(A), \, \view_\C(B)\right) = \D_{KL} \left(f_{\overline{A}_{\H} \, \vline \, A_\H}, \, f_{\overline{B}_{\H} \, \vline \, B_\H}\right)  \nonumber \\
    & \qquad \qquad + \D_{KL}\left(f_{\left(\widetilde{A}_\C,  \, R_\C \right) \, \vline \, A_\C}, \, f_{\left(\widetilde{B}_\C,  \, R_\C \right) \, \vline \, B_\C}\right). \label{eqn:view_kl}
\end{align}
As the affine coefficients $A_C$ and $B_C$ are identical to each other, we get from~\eqref{eqn:alpha_alpha}, the conditional probability distributions $f_{\left(\widetilde{A}_\C,  \, R_\C \right) \, \vline \, A_\C}$ and $f_{\left(\widetilde{B}_\C,  \, R_\C \right) \, \vline \, B_\C}$ are equivalent. Therefore, 
$\D_{KL} \left(f_{\left(\widetilde{A}_\C,  \, R_\C \right) \, \vline \, A_\C}, \, f_{\left(\widetilde{B}_\C,  \, R_\C \right) \, \vline \, B_\C} \right) = 0.$
Upon substituting this in~\eqref{eqn:view_kl} we obtain that 
\begin{align}
    \D_{KL}\left(\view_\C(A), \, \view_\C(B)\right) = \D_{KL} \left(f_{\overline{A}_{\H} \, \vline \, A_\H}, \, f_{\overline{B}_{\H} \, \vline \, B_\H}\right). \label{eqn:final_kl_view}
\end{align}
Let $\G_\H = \left( \H, \E_\H \right)$ be the residual honest graph, and let $\L_{\H}$ denote the graph-Laplacian of $\G_\H$. As we assume that $\C$ is not a vertex cut of $\G$, $\G_\H$ is connected. Therefore, substituting from Lemma~\ref{lem:priv_1} in~\eqref{eqn:final_kl_view} we obtain that 
\[\D_{KL} \left(\view_\C(A) , \,  \view_\C(B) \right) \leq \frac{1}{4\sigma^2\underline{\mu}(\L_\H)} \norm{A_\H - B_\H}^2.\]
As $A_\C = B_\C$, $\norm{A_\H - B_\H}^2 = \norm{A - B}^2$.  $\hfill \blacksquare$

\section{Numerical Simulation}
\label{sec:sim}
In this section, we present a numerical simulation of the \af{FS} protocol. We consider a network of 3 agents, $\{1, \, 2, \, 3\}$, connected in a {\em complete} graph. The agents' private local costs are $h_1(x) = x^2 + x, \, h_2(x) = x^2 + 2x, \text{ and } h_3(x) = x^2 + 3x$, where $x \in [-100,\,100]$. Thus, $A = [\alpha_1, \, \alpha_2, \, \alpha_3] = [1, \, 2, \, 3]$.
For computing the effective cost functions, defined in~\eqref{eqn:eff_cost}, the agents use $\sigma = 1$ in phase I. In phase II, we simulate the \af{DGD} on the effective cost functions. 
The absolute differences of the agents' local estimates from the minimizer of the aggregate cost is plotted in Fig.~\ref{fig:estimates}, for both the \af{FS} protocol and the conventional \af{DGD} algorithm, to show convergence. 


\begin{figure}[t!]
\centering
\centering \includegraphics[width=0.35\textwidth]{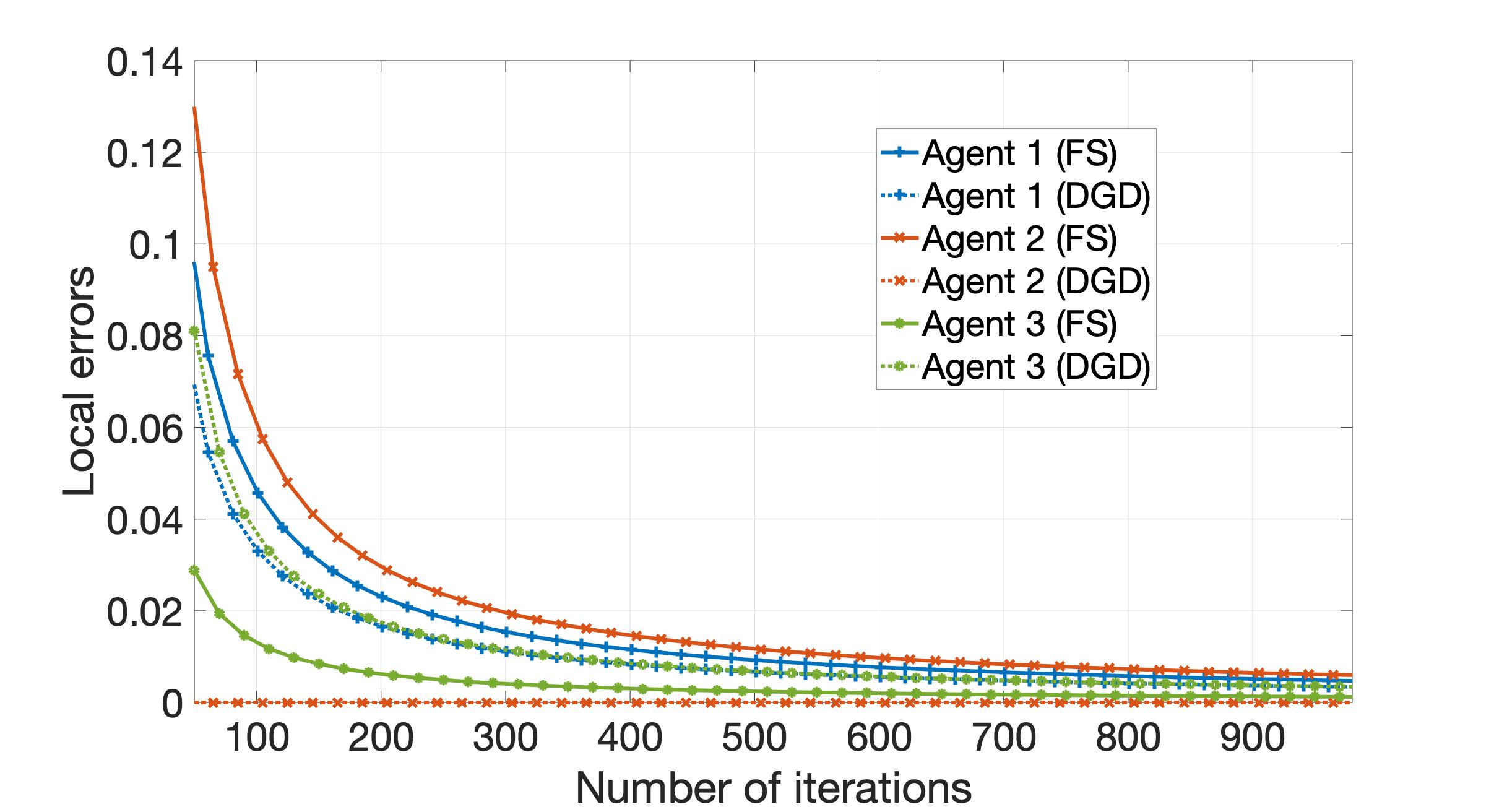} 
\caption{The agents' local errors from the optimizer~\eqref{eqn:opt} converges to zero.}
\label{fig:estimates}
\end{figure}
\begin{figure}[t!]
\centering
\centering \includegraphics[width=0.38\textwidth]{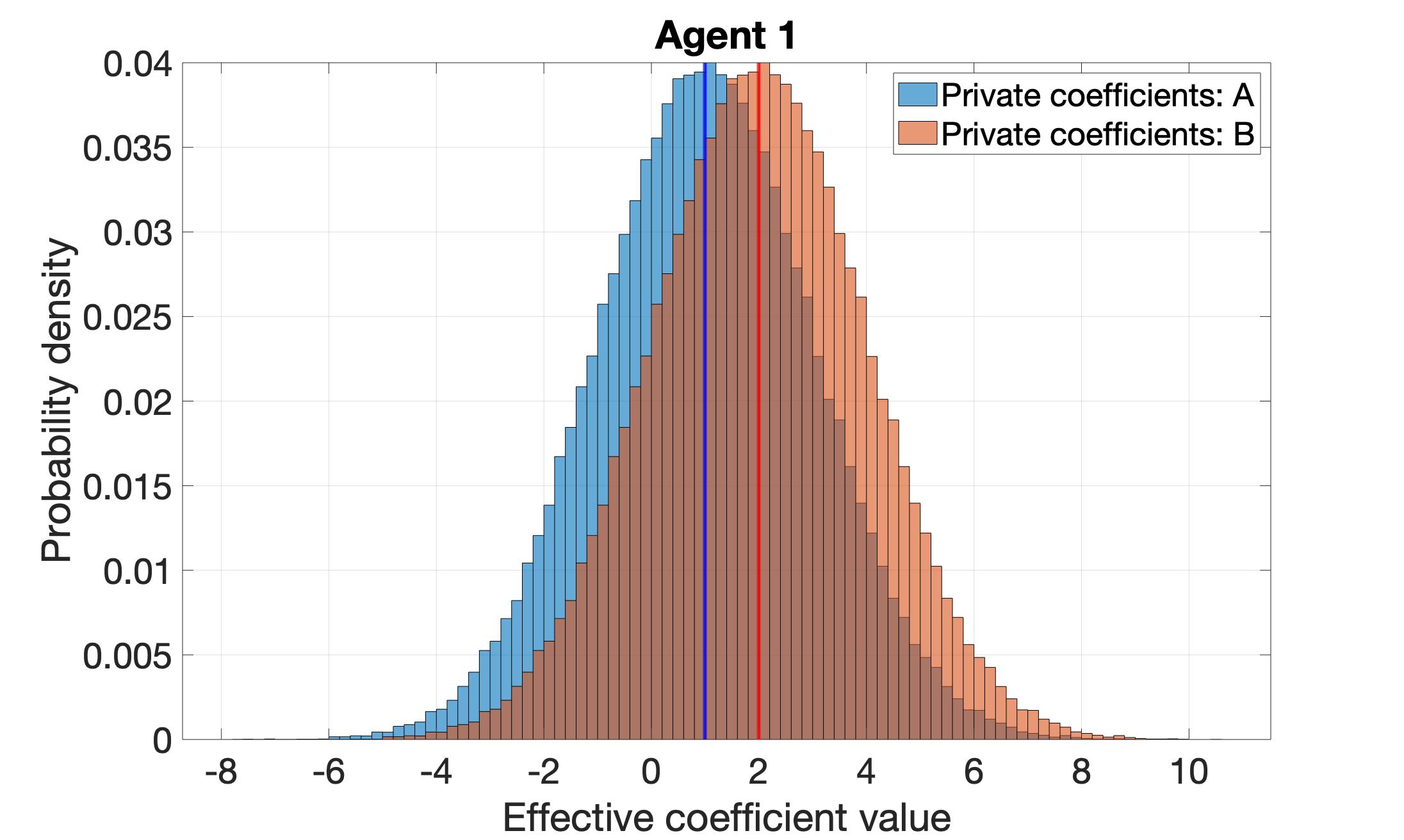} 
\caption{The p.d.f's of agent 1's affine coefficients computed numerically for the two scenarios when the agents' private coefficients are $A$ and $B$.}
\label{fig:prob}
\end{figure}

We assume agent $3$ to be corrupted by a passive adversary, i.e., $\C = \{3\}$ and $\H = \{1, \, 2\}$. We consider an alternate scenario where agents' affine coefficients are given by $B = [\beta_1, \, \beta_2, \, \beta_3] = [2, 1, 3]$. 
Note that $\alpha_3 = \beta_3$ and $\sum_{i = 1}^3 \beta_i = \sum_{i = 1}^3 \alpha_i = 6$. We simulate $100,000$ executions of the \af{FS} protocol for both scenarios. The p.d.f's of agent 1's effective affine coefficients generated in phase I for both the scenarios are shown in Fig.~\ref{fig:prob}. To compute the value of $\D_{KL}(\view_\C(A), \, \view_\C(B))$, we first numerically approximate $p_A$ and $p_B$, the respective conditional p.d.f.s of the effective coefficients $\left[\overline{\alpha}_1, \, \overline{\alpha}_2 \right]$ and $\left[\overline{\beta}_1, \, \overline{\beta}_2 \right]$ (defined by~\eqref{eqn:over_alpha}) given the agents coefficients $A$ and $B$, using the MATLAB's `$\mathsf{fitdist}$' function. Note that, owing to~\eqref{eqn:final_kl_view}, $\D_{KL}(\view_\C(A), \, \view_\C(B)) = \D_{KL}(p_A, \, p_B)$. 
We obtain that $p_A$ and $p_B$ are Gaussian distributions with mean values $\mu_A = [1.00, 2.00]$ and $\mu_B = [2.00, 1.00]$, respectively, and an identical covariance matrix $\Sigma = [2.00,  -2.00; -2.00, 2.00]$. Thus, $\D_{KL}(p_A, \, p_B) = 0.5 (\mu_A - \mu_B) \Sigma^\dagger (\mu_A - \mu_B)^T = 0.25.$
This matches the theoretical bound computed by substituting $\underline{\mu}(\L_\H) = 2$, $\sigma = 1$, and $\norm{A - B}^2 = 2$ in Theorem~\ref{thm:priv_2}.
\section{Concluding Remarks}
\label{sec:dis}

We have presented a theoretical privacy analysis for the Function Sharing or \af{FS} protocol, a distributed optimization protocol proposed in~\cite{gade2018private} for protecting privacy of agents' costs against a passive adversary that corrupts some of the agents in the network. We have shown that the \af{FS} protocol 
preserves the statistical privacy of the polynomial terms of the honest agents' costs if the corrupted agents do not constitute a vertex cut of the network. If the network has $(t+1)$-connectivity then the statistical privacy of the \af{FS} protocol holds true against all passive adversaries that corrupt at most $t$ agents.

\bibliographystyle{IEEEtran}
\bibliography{references_optimization}

\end{document}